\theoremstyle{plain}
\newtheorem{theorem}{Theorem}
\newtheorem{prop}[theorem]{Proposition}
\newtheorem{obs}[theorem]{Observation}
\newtheorem{claim}[theorem]{Claim}
\theoremstyle{remark}
\theoremstyle{definition}
\newcommand{\loc}{L}
\definecolor{codegreen}{rgb}{0,0.6,0}
\definecolor{codegray}{rgb}{0.5,0.5,0.5}
\definecolor{codepurple}{rgb}{0.58,0,0.82}
\definecolor{backcolour}{rgb}{0.95,0.95,0.92}
\lstdefinestyle{mystyle}{
    backgroundcolor=\color{backcolour},   
    commentstyle=\color{codegreen},
    keywordstyle=\color{magenta},
    numberstyle=\tiny\color{codegray},
    stringstyle=\color{codepurple},
    basicstyle=\footnotesize,
    breakatwhitespace=false,         
    breaklines=true,                 
    captionpos=b,                    
    keepspaces=true,                 
    numbers=left,                    
    numbersep=5pt,                  
    showspaces=false,                
    showstringspaces=false,
    showtabs=false,                  
    tabsize=2
}
\DeclareMathOperator{\diag}{diag}
\newcommand{\id}{\mathbf{1}}
\renewcommand{\emph}[1]{{\it #1}}
\newcommand{\tr}{\mathrm{Tr}}
\newcommand{\ketbra}[2]{|{#1}\rangle \! \langle {#2}|}
\def\E{ {\cal E} }
\newcommand{\dyad}[2]{| #1\rangle \langle #2|}
\newcommand{\norm}[1]{\left\| #1 \right\|}
\newcommand{\distquantum}[1]{\mathcal N_{#1}}
\newcommand{\distclassical}[1]{\nu_{#1}}
\newcommand*{\colorboxed}{}
\def\colorboxed#1#{%
  \colorboxedAux{#1}%
}
\newcommand*{\colorboxedAux}[3]{%
  \begingroup
    \colorlet{cb@saved}{.}%
    \color#1{#2}%
    \boxed{%
      \color{cb@saved}%
      #3%
    }%
  \endgroup
}
\renewcommand*\env@matrix[1][\arraystretch]{%
  \edef\arraystretch{#1}%
  \hskip -\arraycolsep
  \let\@ifnextchar\new@ifnextchar
  \array{*\c@MaxMatrixCols c}}
\begin{document}

\title{
Channel nonlocality under decoherence 
}

\author{Albert Rico${}^{\orcidlink{0000-0001-8211-499X}}$}
\affiliation{
Faculty of Physics, Astronomy and Applied Computer Science, Institute of Theoretical Physics, Jagiellonian University,
30-348 Krak\'{o}w, 
Poland}
\author{Mois\'es Bermejo Mor\'an${}^{\orcidlink{0000-0003-1441-0468}}$}
\affiliation{
Faculty of Physics, Astronomy and Applied Computer Science, Institute of Theoretical Physics, Jagiellonian University,
30-348 Krak\'{o}w, 
Poland}
\affiliation{
Laboratoire d’Information Quantique, Université libre de Bruxelles, Belgium
}
\author{Fereshte Shahbeigi${}^{\orcidlink{0000-0001-9991-6112}}$}
\affiliation{
Faculty of Physics, Astronomy and Applied Computer Science, Institute of Theoretical Physics, Jagiellonian University,
30-348 Krak\'{o}w, 
Poland}
\affiliation{RCQI, Institute of Physics, Slovak Academy of Sciences, D\'{u}bravsk\'{a} cesta 9, 84511 Bratislava, Slovakia}
\author{Karol \.{Z}yczkowski${}^{\orcidlink{0000-0002-0653-3639}}$}
\affiliation{
Faculty of Physics, Astronomy and Applied Computer Science, Institute of Theoretical Physics, Jagiellonian University,
30-348 Krak\'{o}w, 
Poland}
\affiliation{
Center for Theoretical Physics, Polish Academy of Science,
02-668 Warszawa, Poland}

\date{\today}

\begin{abstract}

The implementation of realistic quantum devices requires a solid understanding of the nonlocal resources present in quantum channels, and the effects of decoherence on them. 
Here we quantify nonlocality of bipartite quantum channels and identify its component resisting the effects of dephasing noise.
Despite its classical nature, we demonstrate that the latter plays a relevant role in performing quantum protocols, such as state transformations and quantum coding for noisy communication.
In the converse direction, we show that  simulating certain stochastic processes with quantum channels undergoing decoherence has a communication advantage with respect to their classical  simulation.

\end{abstract}

\maketitle

{\em Introduction.\,\,} 
The phenomenon of Bell nonlocality lies at the core of quantum theory, as quantum distributions cannot always be explained by hidden variables assuming local realism~\cite{bell1964einstein, bell1966problem}. 
Nonlocality constitutes a resource for device-independent protocols in cryptography and information theory, such as public key distribution schemes, conjugate coding, and 
randomness verification 
~\cite{wiesner1983conjugate,pironio2010random,bennett2014quantum}. 
Nonlocal resources can be present both in the quantum state describing a system~\cite{BrunnerReviewBellNonloc_2014}, and in a quantum operation manipulating such state~\cite{ChSteerNLbeyond_Hoban2018Games,ResNCchAssem_Zjawin2023Games,CorrCompMeas_Selby2023Games,rosset-type-independent-prl,schmid2020type}. 

In practice, both the storage of a quantum state and its manipulation through a quantum channel suffer from the phenomenon of {\em decoherence}~\cite{peres1997quantum,Nielsen_Chuang_2010}: 
the interaction with the environment 
can fully or partially spoil quantum properties~\cite{schlosshauer2007quantum,Breuer-opensysytems,wiseman2009qmeascontrol}. Therefore, successful realization of noisy quantum devices demands a deep understanding of quantum properties under decoherence~\cite{PRXQCriticallityUnderDeco_Lee2023,devetak2005capacitydeco,lee2024symmetryprotectedtopologicalphasesDeco,lyons2024understandingstabilizercodeslocalDeco}. For this purpose, a well-established model of decoherence is the dephasing noise $\mathcal{D}$, which damps the off-diagonal terms in a density matrix while keeping intact the terms diagonal in a preferred  basis~\cite{ResCoh_Winter2016}. In the worst-case scenario, complete decoherence leads to vanishing of the off-diagonal terms. This reduces a quantum state to a probability vector~\cite{devetak2005capacitydeco,DArrigo_2007_CapDepChann,Bradler2010TroffCapHadChann,ResCoh_Winter2016}  and a quantum channel to a stochastic map described by a stochastic matrix~\cite{Puchala_Dephasing2021}.   



In recent years, increasing attention has been paid to understanding the nonlocal properties of quantum operations. When no shared resources are allowed, the 
operator Schmidt decomposition of a bipartite unitary gate~\cite{NDD+03QuantDynPhysRes_Nielsen2003,MKZ13UnitQGatesEntUnist_Musz2013} or a quantum channel~\cite{MAH+24QTensor-ProdDecChoi_Mansuroglu2024} determines whether it can be performed locally. On the other hand, linear tests for nonlocality of states~\cite{BuscemiNonlocHV_2012,BrunnerReviewBellNonloc_2014} can be lifted to the level of channels~\cite{schmid2020type}. There, the probability distributions
obtained by local quantum measurements~\cite{ChSteerNLbeyond_Hoban2018Games,schmid2020type,CorrCompMeas_Selby2023Games,OurLong_2024} reveal whether a bipartite channel can be implemented locally with shared randomness. 
While these methods provide a qualitative characterization, a quantitative analysis is desired for a further step in understanding both the nonlocal resources of channels and the effects of decoherence on them. 

In this work we quantify the nonlocal resources of bipartite channels with several measures, and their corresponding components remaining when complete decoherence occurs. 
The latter determine the nonlocal transformations that a channel can induce on the computational basis, e.g. bit pairs. 
Using a semidefinite programming hierarchy for bilinear optimization~\cite{berta2021semidefinite}, we evaluate the role of nonlocal resources in performing quantum protocols. 
Despite accounting for a classical resource, the nonlocality resisting decoherence is found to be required for optimal performance. 
In a similar direction, we compare two different protocols to  simulate discrete Markov chains~\cite{Breuer-opensysytems}, namely using standard classical methods with respect to using a quantum channel that undergoes complete decoherence. 
We show that the latter gives advantage in the communication needed.

{\em Preliminaries.\,\,} 
Our work 
relies on the bridge between two seemingly different structures~\cite{FormSteeringLM_Sainz2018Games,ChSteerNLbeyond_Hoban2018Games,CorrCompMeas_Selby2023Games,ResNCchAssem_Zjawin2023Games,ResThNCcomcauAssem_Zjawin2023Games,OurLong_2024}: the polytope of bipartite distributions~\cite{BrunnerReviewBellNonloc_2014}, and the set of bipartite quantum channels~\cite{Geller_2014,BGNP01causal,schmid2021postquantum,Eggeling_Semicausal2002}. A bipartite distribution $p(ab|xy)$ is \emph{local} (\loc{}) if it decomposes as
$p(a,b|x,y)=\sum_\lambda p_\lambda \, p(a|x,\lambda) p(b|y,\lambda)\,$ 
with $p_\lambda \geq 0$ and $\sum_\lambda p_\lambda = 1$,
and nonlocal otherwise~\cite{BrunnerReviewBellNonloc_2014}. Distributions obtained from bipartite quantum states $\rho_{AB}$ via Born's rule,
$ p(a,b|x,y) = \tr(P^{a|x}\otimes Q^{b|y}\rho_{AB}) $
for local measurement effects $P^{a|x}$ and $Q^{b|y}$, are called \emph{quantum} (Q). Nonlocal resources present in quantum states~\cite{bell1964einstein,werner1989quantum} cannot increase under local operations with shared randomness (LOSR)~\cite{BuscemiNonlocHV_2012,schmid2020type,rosset-type-independent-prl,Liu_ResTherQChan2020},
$\Phi_{AB}=\sum_\lambda p_\lambda\Phi_A^\lambda\otimes\Phi_B^\lambda\,$. 
Bipartite channels outside the set of LOSR are called {\em nonlocal} \cite{OurLong_2024}. The set of LOSR operations is preserved under local superchannels with shared randomness, 
$\sum_\lambda p_\lambda\Xi^\lambda_A\otimes\Xi^\lambda_B$, which are the free operations in the resource theory of nonlocality of channels~\cite{schmid2020type}. Further details about the main structure and relevant subsets of bipartite channels and distributions are given in Appendix~\ref{app:NLchannels&distributions}.

{\em Nonlocality of quantum and classical channels.\,\,}
The nonlocal resources available in a quantum channel $\E: A_0 B_0 \to A_1 B_1$ will be treated as a convex resource theory \cite{regula2017convex}, where LOSR operations are the free elements. Therefore, we quantify the nonlocal resources of a channel as
\begin{equation}\label{eq:ChanNLdef}
\distquantum{\Delta} (\E) = \min \{\Delta(\E, \Phi) : \Phi \in \operatorname{LOSR}\}
\end{equation}
for a suitable non-negative functional $\Delta$. 
We require that $\Delta$ is continuous, convex, non-degenerate and contractive under superchannels so that $\mathcal N_\Delta$ attains the minimum, is convex, faithful and monotonic under local superchannels respectively (see Appendix~\ref{app:ProofNQmonotone}). 

Examples are $\mathcal{N}_1$ defined from the trace distance  $\Delta(\E, \Phi) = \norm{J^\E - J^\Phi}_1$ with $\norm{X}_1= \tr \sqrt{XX^\dagger}$, which admits semidefinite programming bounds for unitary channels as shown in Appendix~\ref{app:ProofNQmonotone}; $\mathcal{N}_\diamond$ from the diamond distance $\Delta(\E, \Phi) = \norm{\E - \Phi}_\diamond$ with $\norm{\E}_\diamond  = \sup\{ \norm{\E \otimes \id (X)}_1 : \norm{X}_1 \leq 1\}$; and $\mathcal{N}_H$ from the relative entropy $\Delta(\E,\Phi)=H(J^\E|J^\Phi) = \tr J^\E (\log J^\E - \log J^\Phi)$, 
where $J^\E$ 
is the unnormalized four-partite Choi state~\cite{choi1975completely, jamiolkowski1972linear}. Faithful monotones can similarly be obtained from Eq.~\eqref{eq:ChanNLdef} using quantum R\'enyi divergences \cite{van2005statistical, lennert2013onquantum, fawzi2021defining}. 
Physically, $\distquantum{\diamond}(\E)$ corresponds to the maximal success probability of discriminating $\E$ with respect to the least distinguishable LOSR operation $\Phi$~\cite{HelstromDisting1969,Superchannels_Gour2019}. 
This can be bounded with $\distquantum{1}(\E)$ 
through $\distquantum{1}(\E) \leq \distquantum{\diamond}(\E) \leq \dim A_0 B_0 \distquantum{1}(\E)$. 
Similarly, $\distquantum{H}(\E)$ measures the smallest relative entropy between the Choi matrices of $\E$ and some LOSR channel $\Phi$, which is a frequent measure for their distinguishability~\cite{Superchannels_Gour2019}.

\begin{figure}[tbp]
\includegraphics[width=0.45\textwidth]{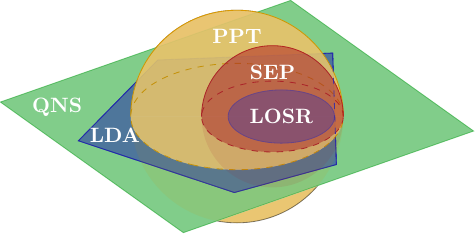}
    \caption{{\bf Subsets of bipartite channels and their cross sections.} The orange convex set (PPT) contains PPT-preserving channels~\cite{Leung_NSCodesSDP_2015}. 
    The green hyperplane represents the quantum nonsignaling constraints (QNS) in Eqs.~\eqref{eq:qNSab} and \eqref{eq:qNSba}, and contains the purple set of local operations with shared randomness (LOSR). The intersection between QNS and  separable channels~\cite{chitambar2014everything} (SEP) depicted in red outer-approximates LOSR through the first level of the hierarchy~\cite{berta2021semidefinite} given in Eqs.~\eqref{eq:appCP}-\eqref{eq:appCONS2}. This approximation can be strengthened with linear constraints imposing a local decoherent action (LDA), which give the polytope depicted in blue.} 
    
    \label{fig:SetsSepChan}
\end{figure}

Consider a bipartite channel $\mathcal{E}_{AB}$ which undergoes dephasing noise $\mathcal{D}(\dyad{i}{j})=\dyad{i}{j}\delta_{ij}$ both in its input and output states, $\E^\mathcal{D}:=\mathcal{D}\circ\E\circ\mathcal{D}$, where $\delta_{ij}$ is the Kronecker delta and $\circ$ denotes channel concatenation. Then, one can describe the effective action of $\E^\mathcal{D}$ by the {\em decoherent action} (or classical action) of $\E$, which is a classical channel acting on digits~\cite{InfProcGPTClasChan_Barrett2007,InterconvNLcorrClasChan_Jones2005,OprFramewNLClasChan_Gallego2012,NLRTmeasClasChan_deVicente2014} given by a bipartite stochastic matrix $S^\E$ with entries
~\cite{KamilCohQChan_2018,shahbeigi-log-convex,Puchala_Dephasing2021,QAdvSimStoch_Kamil2021,QEmbStoch_Fereshte2023,OurLong_2024}
\begin{equation}\label{eq:CAchannel}
 S_{ab,xy}^{\mathcal{E}}=\bra{ab}\mathcal{E}(\dyad{xy}{xy})\ket{ab}\,,
\end{equation}
where indices $(a,b,x,y)$ label the canonical basis of $A_1 B_1 A_0 B_0$. The concatenation of the columns of $S^\E$ coincides with the diagonal of the Choi matrix of the channel $\E$, which is preserved under complete decoherence. As bipartite distributions correspond with bipartite stochastic matrices, the decoherent action~\eqref{eq:CAchannel} defines probabilities $p(a,b|x,y)=S_{ab,xy}$ of obtaining outputs $a,b$ conditioned to inputs $x,y$ through a measurement protocol performed in the computational basis~\cite{ChSteerNLbeyond_Hoban2018Games,schmid2020type,OurLong_2024}.
Correspondingly, we denote as {\em decoherent nonlocality} of a channel $\E$ the nonlocal resources present in the distribution defining its decoherent action $S^\E$.

Nonlocality of bipartite conditional distributions (or classical channels) can be quantified in several ways~\cite{ViolationsRobustnessWN_Kaszlikowski2000,QNLin2-3levRobustnessWN_Acin2002,perez2008unbounded,RobustnessLNoise_Massar2002,deVicente_2014}. For any continuous, convex, non-degenerate and contractive non-negative functional $\delta$ we introduce the measure
\begin{equation}\label{eq:MeasNLtrDist}
    \distclassical{\delta}(S)=\min \{ \delta(S,T) : T \in \operatorname{L} \}\,. 
\end{equation}
Note that $\nu_\delta(S^\E)$ quantifies the decoherent nonlocality of a bipartite channel $\E$. Indeed, each measure $\distclassical{\delta}$ arises from a measure $\mathcal{N}_\Delta$ by taking $\delta(S,T) = \Delta(\E,\Phi)$ for channels $\E = \mathcal{D}\circ\E\circ\mathcal{D}$ and $\Phi = \mathcal{D}\circ\Phi\circ\mathcal{D}$ with decoherent actions $S$ and $T$. A stronger correspondence between Eqs.~\eqref{eq:ChanNLdef} and \eqref{eq:MeasNLtrDist} holds:
\begin{prop}\label{prop:NcvsNq}
The following relations hold between each measure $\distquantum{\Delta}$ of channel nonlocality and its corresponding measure $\distclassical{\delta}$ of decoherent nonlocality.
\begin{enumerate}[label=(\roman*)]
\item $\distquantum{\Delta}(\mathcal D \circ \E \circ \mathcal D) =  \distclassical{\delta}(S^\E)$
\item $\distquantum{\Delta}(\E)\geq\distclassical{\delta}(S^\E)$.
\end{enumerate}
\end{prop}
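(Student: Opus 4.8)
The plan is to prove part (i) first and derive part (ii) from it. The two structural facts I would lean on throughout are that complete dephasing of input and output, viewed as the superchannel $\mathcal{S}:\mathcal{F}\mapsto\mathcal{D}\circ\mathcal{F}\circ\mathcal{D}$, is a \emph{local} superchannel—it factorizes as $\mathcal{D}_{A_0}\otimes\mathcal{D}_{B_0}$ on the input and $\mathcal{D}_{A_1}\otimes\mathcal{D}_{B_1}$ on the output across the $A|B$ cut—so that $\distquantum{\Delta}$ is monotone under it and the set $\operatorname{LOSR}$ is preserved by it; and that $\mathcal{D}$ is idempotent, $\mathcal{D}\circ\mathcal{D}=\mathcal{D}$.

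For (i), I would start from the definition $\distquantum{\Delta}(\mathcal D\circ\E\circ\mathcal D)=\min\{\Delta(\mathcal D\circ\E\circ\mathcal D,\Phi):\Phi\in\operatorname{LOSR}\}$ and argue that the minimum may be restricted to \emph{fully dephased} LOSR channels without changing its value. Applying contractivity of $\Delta$ under $\mathcal{S}$ together with idempotence of $\mathcal{D}$ gives, for every $\Phi$,
\[
\Delta(\mathcal D\circ\E\circ\mathcal D,\Phi)\ \geq\ \Delta(\mathcal D\circ\E\circ\mathcal D,\,\mathcal D\circ\Phi\circ\mathcal D),
\]
while $\mathcal D\circ\Phi\circ\mathcal D\in\operatorname{LOSR}$ since $\mathcal{S}$ preserves $\operatorname{LOSR}$. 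As the dephased images form a subset of $\operatorname{LOSR}$, minimizing over them gives a value no smaller than the full minimum; combined with the inequality above, the two minima coincide. I would then invoke the channel–distribution bridge: a fully dephased channel is classical and uniquely determined by its decoherent action, and (as set up in Appendix~\ref{app:NLchannels&distributions}) it lies in $\operatorname{LOSR}$ exactly when that action is a local distribution $T\in\operatorname{L}$. Rewriting each $\Delta(\mathcal D\circ\E\circ\mathcal D,\mathcal D\circ\Phi\circ\mathcal D)$ as $\delta(S^\E,T)$ through the defining identification of $\delta$, the restricted minimum becomes $\min\{\delta(S^\E,T):T\in\operatorname{L}\}=\distclassical{\delta}(S^\E)$, which is (i).

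For (ii), I would deduce it from (i) in essentially one line. Since $\mathcal{S}$ is a local superchannel and $\distquantum{\Delta}$ is monotone under such maps (Appendix~\ref{app:ProofNQmonotone}), we have $\distquantum{\Delta}(\E)\geq\distquantum{\Delta}(\mathcal D\circ\E\circ\mathcal D)$, and the right-hand side equals $\distclassical{\delta}(S^\E)$ by (i). Equivalently, one applies contractivity directly to an optimal $\Phi$ and uses $S^{\mathcal D\circ\E\circ\mathcal D}=S^\E$, which holds because complete decoherence preserves the diagonal of the Choi matrix.

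The main obstacle I anticipate is not the optimization manipulations, which are routine given contractivity and monotonicity, but pinning down the correspondence invoked in the second paragraph: that the decoherent action furnishes a bijection between fully dephased channels and bipartite stochastic matrices, under which $\operatorname{LOSR}$ maps onto $\operatorname{L}$. The forward inclusion is immediate, since dephasing an LOSR channel $\sum_\lambda p_\lambda\,\Phi_A^\lambda\otimes\Phi_B^\lambda$ produces the local distribution $\sum_\lambda p_\lambda\,p(a|x,\lambda)p(b|y,\lambda)$; the converse requires realizing any local distribution by the measure-and-prepare LOSR channel with the same transition probabilities. Verifying that this identification is consistent with the definition of $\delta$, so that $\delta(S^\E,S^\Phi)=\Delta(\mathcal D\circ\E\circ\mathcal D,\mathcal D\circ\Phi\circ\mathcal D)$ for all relevant $\Phi$, is the one place where genuine care is needed.
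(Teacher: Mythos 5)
Your proposal is correct and follows essentially the same route as the paper's proof: both use contractivity of $\Delta$ under the (local) dephasing superchannel together with idempotence $\mathcal D\circ\mathcal D=\mathcal D$ to show that the minimum over LOSR can be restricted to dephased LOSR channels, then identify that restricted minimum with $\distclassical{\delta}(S^\E)$ via the fact that decoherence maps LOSR onto the local polytope, and finally deduce (ii) from (i) by monotonicity of $\distquantum{\Delta}$ under local superchannels. The only cosmetic difference is that you restrict the whole minimization at once, whereas the paper runs the same two inequalities through a fixed optimizer $\tilde\Phi$.
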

\noindent Proposition~\ref{prop:NcvsNq} allows for bounding the quantum nonlocal resources of a channel from its nonlocality resisting decoherence, which is arguably much simpler to estimate both in theory and experiments. Geometrically, item (i) holds because the dephasing channel $\mathcal{D}$ induces a cross section in the space of channels which is also a projection, and item (ii) follows from (i). A detailed proof is given in Appendix~\ref{app:NcvsNq}.

Examples of Proposition~\ref{prop:NcvsNq} are 
sketched below and discussed in Appendix~\ref{app:NcvsNq}. Complete dephasing reduces $\distquantum{1}$ to the measure $\distclassical{1}$ defined from $\delta_1(S, T) = \norm{S - T}_{\ell_1} = \sum_{ij} |S_{ij} - T_{ij}| $. This coincides with the measure of nonlocality for bipartite distributions introduced in~\cite{BritoQuantBellNLtrDist_2018}, whose value in settings with two measurements (corresponding to channels with two dimensional local input spaces) is determined by the maximum value of certain Bell functionals on the subset~\cite{GeoDecompBellPolyt_Bierhorst2016,BritoQuantBellNLtrDist_2018}. For the relative entropy, $\distquantum{H}$ reduces to $\distclassical{H}$ defined from the classical relative entropy, $\delta_H(S, T) = H(S|T) = \sum_{ij} S_{ij} (\log S_{ij} - \log T_{ij})$. Last, the diamond measure $\distquantum{\diamond}$ reduces under the decoherence to $\distclassical{\diamond}$ defined from $\delta_\diamond(S, T) = \norm{S - T}_{11}$, where 
$\norm{S}_{11} = \sup \left\{ \sum_i \left| \sum_j S_{ij} x_j \right| : \sum_j |x_j| \leq 1\right\}$
is the operator norm between classical sequence spaces $S : \ell_1(A_0 B_0) \to \ell_1(A_1 B_1)$. Similarly, quantum R\'enyi divergences reduce under decoherence to classical ones \cite{van2005statistical, lennert2013onquantum, fawzi2021defining}.

{\em Nonlocality resisting decoherence.\,\,} The measures $\distclassical{\delta}$ of decoherent nonlocality can be seen as the amount of nonlocal resources that remain after complete decoherence occurs. In particular, these serve as a witness for nonlocality of the channel $\E$ in hand: if $\distclassical{\delta}(S^\E) > 0$ ($S^\E\not\in$ \loc{}), then $\distquantum{\Delta}(\E) > 0$ ($\E\not\in$ LOSR)~\cite{ChSteerNLbeyond_Hoban2018Games,OurLong_2024}.  Consider for example the quantum channel $\E^U_{CN}(\rho)=U_{CN}\rho U_{CN}^\dag$ defined by the controlled-not gate, $ U_{CN}=\sum_{i=0}^1\dyad{i}{i}\otimes\sigma_X^i$.
One verifies that its decoherent action satisfies $S^{U_{CN}}=U_{CN}\odot U_{CN}=U_{CN}$, where $\odot$ is the Schur (entry-wise) product~\cite{OurLong_2024}. This means that after dephasing occurs, the action of the controlled-not gate in two classical bits remains untouched. In fact, $S^{U_{CN}}$ defines an extremal one-way signaling bipartite distribution and maximizes $\distclassical{1}$.

This raises the question of whether the nonlocal resources resisting decoherence, which constitute the decoherent nonlocality, are related with the performance of a channel in quantum protocols which do not suffer from decoherence. The convergent hierarchy detailed in Appendix~\ref{app:hierarchyLOSR} will be used to tackle this question numerically with semidefinite programming.
Although this hierarchy is complete, only the first levels are available in practice and further linear constraints are desired to solve realistic problems. 
Here we will simply add the following linear constraint at each level,
\begin{equation}\label{eq:appLochier}
    \diag(J_{A_0B_0A_1B_1})\in\text{\loc{}},
\end{equation}
which requires that the probability distribution given by the decoherent action lies in the local polytope (Fig.~\ref{fig:SetsSepChan}) and thus enforces $\distclassical{\delta}=0$. Straightforward examples of the strengthening of Eqs.~\eqref{eq:appCP}-\eqref{eq:appCONS2} through Eq.~\eqref{eq:appLochier} are diagonal Choi matrices given by $J^\E=\sum_{xyab}p(ab|xy)\dyad{xyab}{xyab}$ where $p=S^\E$ is nonsignaling (see Appendix~\ref{app:hierarchyLOSR}).

\begin{figure*}[tbp]
\centering
\includegraphics[width=\textwidth]{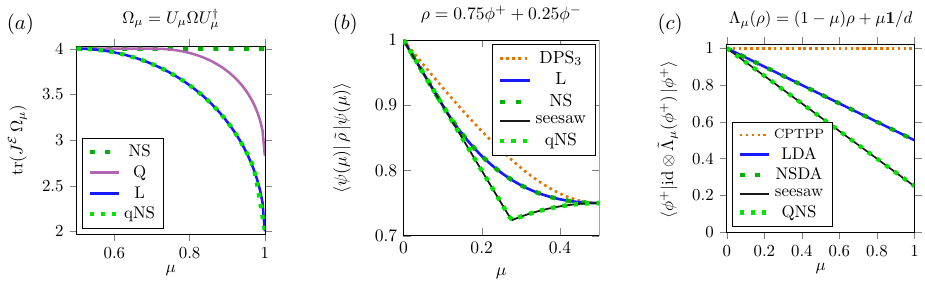}
\caption{{\bf Optimal performance of bipartite channels $\E$ in restricted subsets at quantum protocols with parameter $\mu$}.
Different bounds are obtained when 
$S^\E$ defines local (LDA), quantum (NPA$_3$),
nonsignaling (NSDA), and general distributions, where NPA$_3$ denotes third-level approximation with~\cite{navascues2008convergent}. 
Bounds for LOSR are obtained when those obtained with quantum-nonsignaling channels (QNS) and with a seesaw algorithm coincide. We depict the maximum value of: {(a)} The Bell functional $\Omega_\mu$ in Eq.~\eqref{eq:OptCohWit}. 
{(b)} The fidelity of $\Tilde{\rho}=\E(\rho)$ to 
a target pure state $\ket{\psi_\mu}$, 
where DPS$_3$ approximates SEP at the third level of~\cite{doherty2004complete,Navascues_PPT_DPS_2009}.
{(c)} The fidelity of the depolarizing channel $\Lambda_\mu$ 
assisted 
according to Eq.~\eqref{eq:assisted_communication} and Fig.~\ref{fig:AssistedCommDiagram}.
General superchannels 
(CPTPP)~\cite{Supermaps_Chiribella2008,Superchannels_Gour2019} allow for unit fidelity through $\Xi[\Lambda_\mu]=\text{id}$.
}\label{fig:plot-tasks}
\end{figure*}

{\em Bell functionals on quantum channels.\,\,}
We will introduce functionals on Choi matrices that reach different maximum values for different subsets of bipartite channels. 
This generalizes through Proposition~\ref{prop:NcvsNq} the analogous property of $\distclassical{\delta}$ for bipartite distributions. 
Given a linear functional $\Omega$ on the space of Choi matrices for channels $\E_{AB}$~\cite{OurLong_2024}, consider the maximum value that can be obtained with quantum channels whose decoherent action belongs to a preferred subset $\mathcal L$,
\begin{equation}\label{eq:OptCohWit}
\gamma_{\mathcal L}(\Omega) :=\max \{ \tr(J^\E\Omega) : S^\E \in \mathcal L \}\, .
\end{equation}
Let us fix $\Omega_\mu=U_\mu\Omega U_\mu^{\dag}$, which is a coherified version of the diagonal functional $\Omega=\sum_{x,y,a,b=0}^1 (-1)^{a+b}(-1)^{xy}\dyad{xyab}{xyab}$ used in~\cite{OurLong_2024} to evaluate the Clauser-Horne-Shimony-Holt inequality~\cite{CHSHineq1969} on the decoherent action. Nonzero off-diagonal terms are given by the unitary $U_\mu=\big (\ket{\phi^+_\mu}\bra{00}+\ket{\psi^+_\mu}\bra{01}+\ket{\psi^-_\mu}\bra{10}+\ket{\phi^-_\mu}\bra{11}\big )_{A_0B_0}\otimes\mathbb{1}_{A_1B_1}$ where $\ket{\psi^{\pm}_\mu}=X^{\pm}(\sqrt{\mu}\ket{01}\pm\sqrt{1-\mu}\ket{10})\sqrt{2}$ with $X^{\pm}=(\sigma_X\otimes\sigma_X)^{(\pm1-1)/2}$. 
The states $\ket{\psi^{\pm}_\mu}$ and $\ket{\phi^{\pm}_\mu}$ are entangled for $\mu\neq 0,1$ and form an orthonormal basis for all $\mu$, e.g. the computational basis for $\mu =1$ and the Bell basis for $\mu=1/2$.  Fig.~\ref{fig:plot-tasks} (a) shows that the optimal value of Eq.~\eqref{eq:OptCohWit} depends on the nonlocality of the decoherent action throughout the whole range $\mu\in(1/2,1]$:
a continuous gap exists between the sets of channels whose decoherent action is in the set of local, quantum (an outer approximation~\cite{navascues2008convergent}), and nonsignaling distributions. 

{\em State interconversion.\,\,} 
Consider the fidelity of converting a Bell state which suffered from the unitary noise $\id\otimes\sigma_Z$ with probability $1-p$, $\rho=p\phi^++(1-p)\phi^-$, into a pure state with entanglement entropy $E=\mu\log_2\mu^{-1}+(1-\mu)\log_2(1-\mu)^{-1}$, namely $\ket{\psi_\mu}=\sqrt{\mu}\ket{00}+\sqrt{1-\mu}\ket{11}$ up to local unitaries. The figure of merit is the fidelity between the transformed state by a channel $\mathcal E$ and the target state $\ket{\psi_\mu}$,
\begin{equation}
   \mathcal{F}= \bra{\psi_\mu}\mathcal{E}(\rho)\ket{\psi_\mu}\,.
   \label{eq:fidelity_transformation}
\end{equation}

Standard physical scenarios require that transformations can be done through quantum operations with a limited amount of resources
~\cite{ConcEnt_Bennet1996,NielsenMajor_1999,TransfRev_Horodecki2003,ResThThermo_Brandao2013,QuantifCoh_Baumgratz2014,ResCoh_Winter2016}. Here we limit the nonlocal resources by adding to the hierarchy detailed in Appendix~\ref{app:hierarchyLOSR} further constraints to bound selected subsets of quantum channels (see Appendix~\ref{app:NLchannels&distributions}). Fig.~\ref{fig:plot-tasks} (b) shows that 
the nonlocality present in the decoherent action of the channel $\E$ (Eq.~\eqref{eq:CAchannel}) 
can increase the fidelity of the transformation in Eq.~\eqref{eq:fidelity_transformation}.

{\em Quantum codes.\,\,}
\begin{figure}[tbp]
    \centering
    \includegraphics{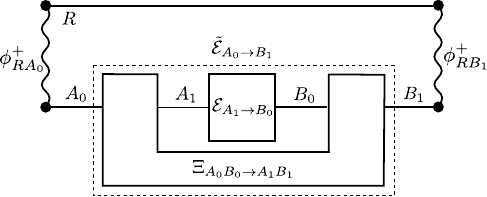}
    \caption{{\bf Assisted communication.} In a standard communication protocol, one part $A_1$ of a Bell state $\phi^+$ is sent to another part $B_0$ through a noisy channel $\mathcal E$ with input $A_1$ and output $B_0$. In an assisted communication protocol, the noisy channel is corrected to $\Tilde{\E}_{A_0\rightarrow B_1}=\Xi_{A_0B_0\rightarrow A_1B_1}(\E_{A_1\rightarrow B_0})$ (dashed outline) by a superchannel $\Xi$. The fidelity of the assisted channel is then the fidelity between the input Bell state and its output image, $\mathcal{F}=\bra{\phi^+}\Tilde{\mathcal E}(\phi^+)\ket{\phi^+}$.}
    \label{fig:AssistedCommDiagram}
\end{figure}
Consider the task of sending information from Alice to Bob through a noisy channel $\mathcal E$. The assisted fidelity of sending one part of a maximally entangled state, 
\begin{equation}
    \mathcal{F}=\tr\Big (\phi^+_{RB_1}\cdot\big (\text{id}_{R}\otimes\Tilde{\mathcal E}_{A_0\rightarrow B_1}\big )(\phi^+_{RA_0})\Big )\,,
    \label{eq:assisted_communication}
\end{equation}
quantifies how well information is transmitted through $\Tilde{\mathcal E}_{A_0\rightarrow B_1}:=\Xi\left(\mathcal E_{A_1\rightarrow B_0}\right)$, which is the transformed version of $\mathcal E_{A_1\rightarrow B_0}$ by the superchannel $\Xi$ (Fig.~\ref{fig:AssistedCommDiagram})~\cite{Leung_NSCodesSDP_2015,BrandaoEAcodesCantNoise_2011,Capacity_Wang2012,WangSDPboundsCapacity_2018,RelEntrCapacity_Mosonyi2009,NoiseChCodesCapacity_Renes2011}. The superchannel $\Xi$ is equivalent to a bipartite channel $\Xi_{A_0B_0\rightarrow A_1B_1}$ whose Choi matrix satisfies $J\geq 0$ with the marginals $J_{A_0B_0}=\id_{A_0B_0}$ and the one-way nonsignaling constraint $J_{A_0A_1B_0}= J_{A_0A_1}\otimes\id_{B_0}/d_{B_0}$~\cite{BGNP01causal,Superchannels_Gour2019}. These constraints alone are compatible with $\mathcal{F}=1$ since they admit a contractive superchannel $\Xi$ such that $\tilde{\mathcal E}=\text{id}$ is the noiseless identity channel. However, physically realistic scenarios can be modelled by imposing extra constraints to $\Xi$, leading to nontrivial bounds~\cite{BrandaoEAcodesCantNoise_2011,WinterBoundsQComm_2015,Leung_NSCodesSDP_2015,WangSDPboundsCapacity_2018,ChanFidSymm_Chee2024}. Bounds for the depolarizing channel $\Lambda_\mu$ with mixing parameter $\mu$ obtained through the strengthened hierarchy of Appendix~\ref{app:hierarchyLOSR} are depicted in Fig~\ref{fig:plot-tasks}~(c). The results of the three protocols depicted in Fig~\ref{fig:plot-tasks} imply the following. 
\begin{prop}
The nonlocal resources of a quantum channel that remain available when coherences are lost (quantified by $\distclassical{\delta}$) are required for optimal performance of coherent quantum protocols.
\end{prop}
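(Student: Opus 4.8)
The plan is to convert the qualitative reading of Fig.~\ref{fig:plot-tasks} into a rigorous strict-separation statement for each of the three protocols. The key reformulation is that, since $\delta$ is non-degenerate, the measure in Eq.~\eqref{eq:MeasNLtrDist} satisfies $\distclassical{\delta}(S^\E)=0$ \emph{exactly} when $S^\E\in\loc{}$, which is precisely the local-decoherent-action (LDA) constraint of Eq.~\eqref{eq:appLochier}. Hence the assertion ``decoherent nonlocality is required for optimal performance'' is equivalent to: for each protocol, the optimum of the figure of merit restricted to channels obeying Eq.~\eqref{eq:appLochier} is \emph{strictly} below the global optimum over the relevant channel class. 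The whole argument therefore reduces to producing, in every panel, a verifiable strict gap between an \emph{upper} bound $v_{\mathrm{LDA}}$ valid for all LDA channels and a \emph{lower} (achievability) bound $v_{\mathrm{NL}}$ attained by some channel with $S^\E\notin\loc{}$.

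First I would write each figure of merit as a linear functional of the Choi matrix $J^\E$: for panel (a) this is $\tr(J^\E\Omega_\mu)$ of Eq.~\eqref{eq:OptCohWit}, for panel (b) the transformation fidelity of Eq.~\eqref{eq:fidelity_transformation}, and for panel (c) the assisted fidelity of Eq.~\eqref{eq:assisted_communication}, with the relevant physical restriction (SEP/LOSR for interconversion, the code-superchannel conditions $J\geq 0$ with fixed marginal and one-way nonsignaling for communication) entering as affine and PSD constraints on $J^\E$. The upper bound $v_{\mathrm{LDA}}$ is then obtained by maximizing this functional over the hierarchy of Eqs.~\eqref{eq:appCP}--\eqref{eq:appCONS2} augmented with Eq.~\eqref{eq:appLochier}. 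Because the hierarchy is an outer relaxation and Eq.~\eqref{eq:appLochier} is an exact linear constraint, the feasible set of this program contains the true set of in-class channels with $S^\E\in\loc{}$, so its optimum $v_{\mathrm{LDA}}$ upper-bounds the genuine LDA-restricted optimum $v_{\mathrm{LDA}}^{\ast}$. In panel (a), where no channel-class restriction is imposed, the local, nonsignaling and general cases are in fact exact linear-plus-PSD programs, so $v_{\mathrm{LDA}}=\gamma_{\loc{}}(\Omega_\mu)$ holds with equality.

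Next I would certify achievability without the LDA constraint by exhibiting, for each protocol, a channel $\E$ with $S^\E\notin\loc{}$ whose figure of merit equals some $v_{\mathrm{NL}}>v_{\mathrm{LDA}}$. For the unconstrained and nonsignaling cases the maximizing channel is attained by the exact program; for the SEP/LOSR cases I would produce a genuine feasible point by a seesaw optimization over explicit decompositions $\sum_\lambda p_\lambda\,\Phi_A^\lambda\otimes\Phi_B^\lambda$, matching the inner seesaw estimate against the QNS outer bound exactly as described for the LOSR curves in the caption of Fig.~\ref{fig:plot-tasks}. Each such point is a genuine lower bound $v_{\mathrm{NL}}$ realized by a channel whose decoherent action is nonlocal. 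Since the global in-class optimum $v^{\ast}$ obeys $v^{\ast}\geq v_{\mathrm{NL}}$ while every LDA channel obeys performance $\leq v_{\mathrm{LDA}}^{\ast}\leq v_{\mathrm{LDA}}$, the strict inequality $v_{\mathrm{LDA}}<v_{\mathrm{NL}}$ observed throughout $\mu\in(1/2,1]$ forces $v_{\mathrm{LDA}}^{\ast}<v^{\ast}$, so no channel with $S^\E\in\loc{}$ can reach the optimum, i.e.\ $\distclassical{\delta}(S^\E)>0$ is necessary; Proposition~\ref{prop:NcvsNq}(ii) then additionally yields $\distquantum{\Delta}(\E)>0$, so the optimal channel is itself nonlocal.

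The main obstacle is ensuring that $v_{\mathrm{LDA}}<v_{\mathrm{NL}}$ reflects the true feasible sets rather than finite-level truncation. On the LDA side this is harmless: the augmented relaxation only over-estimates $v_{\mathrm{LDA}}^{\ast}$, so any observed shortfall is real. The delicate part is the achievability side for the SEP/LOSR panels, where one must close the gap between the outer relaxation (DPS$_k$/NPA$_k$) and the inner seesaw estimate tightly enough to certify $v_{\mathrm{NL}}$ strictly above $v_{\mathrm{LDA}}$; this requires either convergence of the hierarchy at the computed level or a complementary dual certificate, together with control of solver tolerances so that the separation exceeds numerical precision uniformly over the plotted range of $\mu$.
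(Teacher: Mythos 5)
Your proposal is correct and takes essentially the same route as the paper: the paper's entire justification for this proposition is the set of numerical gaps in Fig.~\ref{fig:plot-tasks}, obtained exactly as you describe --- upper bounds from the hierarchy of Eqs.~\eqref{eq:appCP}--\eqref{eq:appCONS2} augmented with the LDA constraint of Eq.~\eqref{eq:appLochier}, compared against values achieved by channels whose decoherent action is nonlocal (exact programs for the nonsignaling and general cases, a seesaw matched against the QNS outer bound for LOSR). Your explicit bookkeeping of which side each relaxation errs on, and the caveat that the strict gap must exceed truncation and solver tolerance, is if anything more careful than the paper, which presents the observed gaps as the proof without further discussion.
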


{\em Quantum  simulation of stochastic matrices.\,\,}
Stochastic matrices, also called classical channels~\cite{InterconvNLcorrClasChan_Jones2005,InfProcGPTClasChan_Barrett2007,OprFramewNLClasChan_Gallego2012,NLRTmeasClasChan_deVicente2014,QRTChitambar2019}, model the evolution of discrete probabilistic systems and are the building blocks of certain stochastic processes like discrete Markov chains~\cite{Breuer-opensysytems}. 
Although these are classical, it has been shown that the quantum simulation of stochastic processes provides advantage in certain resources such as memory and space-time cost with respect to their classical  simulation~\cite{QAdvSimStoch_Kamil2021,QEmbStoch_Fereshte2023}. 
A common tool for this purpose is the decoherent action, using the fact that complete decoherence preserves the action of a channel on classical systems. 
 
Here we will follow a similar approach: one verifies that two parties that share a common classical source of randomness but cannot communicate, can only  simulate bipartite stochastic matrices of the form
\begin{equation}\label{eq:losr-stochastic-matrix}
    S_{AB}=\sum_\lambda p_\lambda T_A^\lambda\otimes R_B^\lambda\, 
\end{equation}
where 
$T_A^\lambda$ and $R_B^\lambda$ are stochastic matrices. This is because only local bipartite distributions can be obtained in this case~\cite{ComCostSimBell_Toner2003}.
In contrast, larger sets of bipartite stochastic matrices can be  simulated using shared quantum resources undergoing noise:

\begin{prop}\label{prop:SimulStoch}
There are stochastic processes that require communication to be  simulated classically, but can be  simulated quantumly without communication through a channel undergoing decoherence.


\end{prop}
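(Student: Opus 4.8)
The plan is to reduce the statement to exhibiting a single bipartite stochastic matrix that is Bell-nonlocal yet arises as the decoherent action of a bipartite quantum channel implementable with shared entanglement alone. By the characterization in Eq.~\eqref{eq:losr-stochastic-matrix}, any stochastic matrix that two non-communicating parties can simulate with shared classical randomness factorizes as a convex mixture $\sum_\lambda p_\lambda T_A^\lambda \otimes R_B^\lambda$, i.e. corresponds to a \emph{local} conditional distribution. Hence it suffices to produce a stochastic matrix $S$ that is \emph{not} of this form---so that its classical reproduction provably requires communication \cite{ComCostSimBell_Toner2003}---while showing $S = S^\E$ for a channel $\E$ that uses no communication and merely undergoes decoherence.

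First I would build the quantum channel as a measure-and-prepare map exploiting a shared entangled resource. Fix a bipartite state $\rho_{AB}$ and, for each classical setting, local POVMs $\{P^{a|x}\}$ on Alice's share and $\{Q^{b|y}\}$ on Bob's. Define $\E: A_0B_0 \to A_1B_1$ by letting each party read its computational-basis input ($x$ for Alice, $y$ for Bob), perform the corresponding local measurement on its half of $\rho_{AB}$, and write the outcome ($a$, $b$) into its output register. This is manifestly a valid completely positive trace-preserving map, assembled from a tensor product of local instruments composed with the fixed shared state, so it is nonsignaling in both directions and demands no communication. Since it already outputs states diagonal in the computational basis, decoherence acts transparently, and its decoherent action in the sense of Eq.~\eqref{eq:CAchannel} is
\begin{equation}
S^\E_{ab,xy} = \bra{ab}\E(\dyad{xy}{xy})\ket{ab} = \tr\big((P^{a|x}\otimes Q^{b|y})\,\rho_{AB}\big) = p(ab|xy),
\end{equation}
precisely the Born-rule distribution of the underlying Bell scenario.

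Finally I would instantiate the construction with $\rho_{AB}=\phi^+$ and qubit measurement settings chosen to violate the CHSH inequality \cite{CHSHineq1969}, so that the resulting $p(ab|xy)$ lies strictly outside the local set and hence outside the form of Eq.~\eqref{eq:losr-stochastic-matrix}. By \cite{ComCostSimBell_Toner2003} its classical reproduction by non-communicating parties is impossible and requires genuine communication, whereas the channel $\E$ above realizes it with shared entanglement alone, its decoherent action recovering $S$ exactly; the statement then follows. The hard part is conceptual rather than computational: one must argue cleanly that the measure-and-prepare channel belongs to the no-communication (nonsignaling) class in the bipartite-channel sense---that conditioning each party's local instrument only on its own input, acting on disjoint subsystems of a pre-shared state, genuinely precludes signaling---so that the separation is attributed to quantum entanglement rather than to any concealed use of communication. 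The remaining checks, that the chosen settings give a strictly nonlocal distribution and that the map is CPTP, are routine.
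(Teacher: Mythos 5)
Your proposal is correct and follows essentially the same route as the paper: the paper's proof also constructs an LOSE channel (local instruments reading computational-basis inputs and measuring a shared entangled state $\tau_{RS}$, written there via Kraus operators $K^{\E_{AR}}_{i,j}=\dyad{i}{j}_A\otimes L^{i|j}_R$), computes the decoherent action $S^\E_{ab,xy}=\tr(M^{a|x}\otimes N^{b|y}\tau_{RS})$ as the Born-rule distribution, and concludes via Bell nonlocality that this exceeds the LOSR-simulable set of Eq.~\eqref{eq:losr-stochastic-matrix}. Your explicit CHSH instantiation with $\phi^+$ simply makes concrete what the paper leaves implicit.
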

To see this, one can use quantum channels which are implemented by using a shared entangled state and applying local operations. In Appendix~\ref{app:SimulStoch} we explain how stochastic matrices corresponding to nonlocal distributions can be obtained when the channel in hand undergoes dephasing noise, in which case measurements can only be performed in the computational basis. 
Proposition~\ref{prop:SimulStoch} implies that quantum advantage in communication remains accessible when coherences are lost in the inputs and outputs of the channel in use. This allows us to use the techniques of~\cite{QAdvSimStoch_Kamil2021}, where quantum advantage was shown in noisy channels, in the spirit of~\cite{dense-coding}, where entanglement reduces communication needed to implement a classical channel. 


{\em Conclusions.\,\,} 
We introduce several measures of nonlocality 
for bipartite channels with desirable properties, and identify their component remaining after decoherence.
Despite accounting for classical resources at the level of channels, we show that 
nonlocality resisting decoherence is required for 
certain quantum protocols: maximizing Bell functionals for channels, state interconversion fidelity, and assisted codes fidelity for noisy channel communication. 
This analysis also demonstrates a gap between different subsets of bipartite quantum channels. 
These results provide significant insight into the nonlocal properties resisting dephasing noise and their role in quantum implementations: Bell nonlocality of correlations is understood as the classical component of channel nonlocality, that can be implemented with fully classical devices.



On a more conceptual level, nonlocality of channels (quantum or classical) accounts for the resources needed for their implementation in distant labs.
These resources include classical communication or shared entanglement. Combining these two, any quantum operation can be locally implemented, while classical communication is enough to locally realize any classical operation. 
Our analysis suggests a natural way to  simulate bipartite stochastic matrices, from bipartite quantum channels which undergo complete decoherence. 
We show that this quantum  simulation method has a communication advantage with respect to classical  simulation.
Our results open an avenue to understanding the tradeoff between entanglement and communication costs of implementing a bipartite quantum operation. 


\bigskip

{\em Acknowledgements:\,\,} We are thankful to 
Remigiusz Augusiak, 
Francesco Buscemi,
Dariusz Chru\'{s}ci\'{n}ski, 
Felix Huber, 
Kamil Korzekwa, 
Miguel Navascu\'{e}s, 
Roberto Salazar, 
Anna Sanpera and 
Mario Ziman 
for valuable comments and for pointing to relevant references. Support by the Foundation for Polish Science through TEAM-NET project  POIR.04.04.00-00-17C1/18-00
and by NCN QuantERA
Project No. 2021/03/Y/ST2/00193 is gratefully acknowledged.  FS acknowledges projects DESCOM VEGA 2/0183/2 and DeQHOST APVV-22-0570.


\appendix
\setcounter{secnumdepth}{1}

\section{Subsets of bipartite channels and bipartite distributions}\label{app:NLchannels&distributions}

Consider a bipartite channel $\mathcal{E}_{AB}$ acting as $\E_{AB}(\rho_{A_0B_0})=\sigma_{A_1B_1}$, with Choi matrix $J_{A_0B_0A_1B_1}$~\cite{choi1975completely,jamiolkowski1972linear}. In the set of bipartite quantum channels, one identifies the following distinguished subsets: 

{\em Local operations assisted by shared randomness} (LOSR), satisfying
\begin{equation}\label{eq:appLOSR}
 \mathcal{E}_{AB}=\sum_\lambda p_\lambda \mathcal{E}_A^\lambda\otimes\mathcal{E}_B^\lambda\,,
\end{equation}
where $p_\lambda\geq 0$, $\sum_\lambda p_\lambda=1$ and $\mathcal{E}_{A(B)}^\lambda$ are quantum channels. These are the free operations in the resource theory of Bell nonlocality~\cite{Geller_2014,deVicente_2014}. Channels outside the convex set of LOSR are termed \emph{nonlocal}.

{\em Local operations with shared entanglement} (LOSE), which can be applied locally with a shared quantum state $\tau_{EF}$~\cite{BGNP01causal,schmid2021postquantum},
\begin{equation}\label{eq:Localizable}
     \!\!\mathcal{E}(\rho_{A_0B_0})\!=\!\tr_{R_1S_1} \big(\mathcal{E}_{AR}\otimes\mathcal{E}_{BS}(\rho_{A_0B_0}\!\otimes\tau_{R_0S_0})\big).
 \end{equation}
{\em Quantum nonsignaling (QNS)} operations, whose Choi matrix satisfies
  \begin{subequations}
  \begin{align}
  J^{\mathcal{E}}_{A_0B_0B_1}&=\id_{A_0}\otimes J^{\mathcal{E}}_{B_0B_1}/d_{A_0},\label{eq:qNSab}\\ J^{\mathcal{E}}_{A_0A_1B_0}&= J^{\mathcal{E}}_{A_0A_1}\otimes\id_{B_0}/d_{B_0}\,\label{eq:qNSba}
  \end{align}
  \end{subequations}
where we denote the partial trace by omitting the traced subsystems.

In the polytope of bipartite distributions (bipartite stochastic matrices), one distinguishes the following subsets: \emph{local} distributions (\loc{})~\cite{bell1966problem,BrunnerReviewBellNonloc_2014}, decomposing as
\begin{equation}
\label{eq:ProbSep}
    p(a,b|x,y)=\sum_\lambda p_\lambda \, p(a|x,\lambda) p(b|y,\lambda)\,,
\end{equation}
where $p_\lambda\geq 0$, $\sum_\lambda p_\lambda=1$; {\em quantum} distributions (Q), obtained as
\begin{equation}\label{eq:ProbBorn}
 p(a,b|x,y) = \tr(P^{a|x}\otimes Q^{b|y}\rho_{AB})
\end{equation}
for some shared state $\rho_{AB}$ and measurement effects $\{P^{a|x}\}$ and $\{Q^{b|y}\}$; and {\em nonsignaling} distributions (NS), whose marginals satisfy
\begin{subequations}\label{eq:ProbNS}
\begin{align}
    \sum_b p(a,b|x,y)& =  \sum_b p(a,b|x,y') \, ,\\
    \sum_a p(a,b|x,y)& =  \sum_a p(a,b|x',y) \, . 
\end{align}
\end{subequations}

\section{Properties of $\distquantum{\Delta}$}\label{app:ProofNQmonotone}
(1) Ref.~\cite[Corollary 4]{OurLong_2024} readily implies LOSR is a closed bounded subset of the euclidean space, and thus compact. Therefore the minimum in Eq.~\eqref{eq:ChanNLdef} is attained. 

(2) First assume that $\Delta$ is convex. This means, $\Delta(\sum_i \lambda_i \E_i, \sum_i \lambda_i \Phi_i) \leq \sum_i \lambda_i \Delta(\E_i, \Phi_i)$ for non-negative $\lambda_i$ adding up to one. In this case, we check that $\distquantum{\Delta}(\sum_i \lambda_i \E_i)  \leq \sum_i \lambda_i \distquantum{\Delta}(\E_i) $ thus $\distquantum{\Delta}$ is also convex. Indeed, for any collection $\Phi_i \in \operatorname{LOSR}$, also $\sum_i \lambda_i \Phi_i \in \operatorname{LOSR}$ by convexity of LOSR.

(3) Next, assume $\Delta$ is non-degenerate, namely $\Delta(\E, \Phi) = 0$ if and only if $\E = \Phi$. Then, $\distquantum{\Delta}(\E) = 0$ if and only if $\E \in \operatorname{LOSR}$ is faithful with respect to LOSR.

(4) Last, assume $\Delta$ is weakly contractive under superchannels, $\Delta(\Xi(\E), \Xi(\Phi)) \leq \Delta(\E, \Phi)$ for any superchannel $\Xi$. Assume $\Xi = \sum_\lambda p_\lambda\Xi_A^\lambda\otimes\Xi_B^\lambda$ is local. In this case, $\Xi(\Phi) \in \operatorname{LOSR}$ for each $\Phi \in \operatorname{LOSR}$. In particular, take $\Phi \in \operatorname{LOSR}$ such that $\Delta(\E, \Phi) = \distquantum{\Delta}(\E)$. Then,
\begin{equation}
\distquantum{\Delta}(\Xi(\E)) \leq \Delta(\Xi(\E), \Xi(\Phi)) \leq \Delta(\E, \Phi) = \distquantum{\Delta}(\E) .
\end{equation}
Thus, $\distquantum{\Delta}$ is monotone under local superchannels.

It is well known that the relative entropy \cite{schumacher2002relative} satisfies all these properties.
Moreover, any distance defined from a norm $\Delta(\E, \Phi) = \norm{\E - \Phi}$ satisfies the first three properties, also under the Choi-Jamio\l kowski isomorphism. Additionally, the fourth is also satisfied when the norm is contractive under linear operations. This is the case for the trace norm and the diamond norm \cite{Superchannels_Gour2019}. For completeness, we sketch the proof for the last.

\begin{claim}
The diamond norm is contractive under supperchannels.
\end{claim}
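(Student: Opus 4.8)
The plan is to reduce the claim to three standard facts about the diamond norm by invoking the structural realization theorem for superchannels. Recall that any superchannel $\Xi$ sending channels $A_0 \to A_1$ to channels $A_0' \to A_1'$ admits a realization $\Xi(\E) = \mathcal{N}_{\mathrm{post}} \circ (\E \otimes \id_M) \circ \mathcal{N}_{\mathrm{pre}}$, where $M$ is an internal memory system and $\mathcal{N}_{\mathrm{pre}}$, $\mathcal{N}_{\mathrm{post}}$ are fixed CPTP maps implementing pre- and post-processing \cite{Supermaps_Chiribella2008,Superchannels_Gour2019}. This decomposition is the only nontrivial ingredient I would import; everything else is elementary norm estimation.

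First I would use linearity of $\Xi$ to write $\Xi(\E) - \Xi(\Phi) = \mathcal{N}_{\mathrm{post}} \circ \big( (\E - \Phi) \otimes \id_M \big) \circ \mathcal{N}_{\mathrm{pre}}$. Then I would apply submultiplicativity of the diamond norm under composition, $\norm{\mathcal{A} \circ \mathcal{B}}_\diamond \leq \norm{\mathcal{A}}_\diamond \, \norm{\mathcal{B}}_\diamond$, to peel off the pre- and post-processing. Since CPTP maps are trace-norm contractive they have unit diamond norm, $\norm{\mathcal{N}_{\mathrm{pre}}}_\diamond = \norm{\mathcal{N}_{\mathrm{post}}}_\diamond = 1$, and since the diamond norm is stable under tensoring with an identity we have $\norm{(\E - \Phi) \otimes \id_M}_\diamond = \norm{\E - \Phi}_\diamond$. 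Chaining these gives $\norm{\Xi(\E) - \Xi(\Phi)}_\diamond \leq \norm{\E - \Phi}_\diamond$, which is precisely the asserted contractivity $\Delta(\Xi(\E),\Xi(\Phi)) \leq \Delta(\E,\Phi)$ for $\Delta = \norm{\cdot}_\diamond$.

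The step I expect to demand the most care is the absorption of the memory system $M$, i.e. the identity $\norm{(\E - \Phi) \otimes \id_M}_\diamond = \norm{\E - \Phi}_\diamond$. This is exactly the stabilization property that distinguishes the diamond norm from the bare induced trace norm: its defining supremum already ranges over an auxiliary reference large enough that adjoining the extra factor $M$ leaves the value unchanged. I would make this explicit rather than tacit, since it is the feature that makes the whole argument close; concretely, when tracking an input $X$ with $\norm{X}_1 \leq 1$ through the three stages, one treats $M$ together with the external reference as a single enlarged reference, so the middle stage contributes a factor of at most $\norm{\E - \Phi}_\diamond$ while the two CPTP stages only contract the trace norm. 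The remaining inequalities are then routine.
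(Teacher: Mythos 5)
Your proof is correct and takes essentially the same approach as the paper's: both invoke the realization theorem $\Xi(\E) = \mathcal{F} \circ (\E \otimes \id_M) \circ \mathcal{G}$ of \cite[Theorem 1]{Superchannels_Gour2019} and reduce contractivity to trace-norm contractivity of CPTP maps together with stabilization of the diamond norm. The only cosmetic difference is that you apply the estimate to the difference $\E - \Phi$ via linearity and phrase it through submultiplicativity, whereas the paper bounds $\norm{\Xi(\E)}_\diamond \leq \norm{\E}_\diamond$ by unwinding the supremum definition directly; the two formulations are equivalent.
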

\begin{proof}
A superchannel $\Xi$ can be realized as $\Xi(\E) = \mathcal F \circ \E \otimes \id \circ \mathcal G$ for some channels $\mathcal F$ and $\mathcal G$ \cite[Theorem 1]{Superchannels_Gour2019}. The trace norm satisfies $\norm{\mathcal F (\bullet)}_1 \leq \norm{\mathcal F}_* \norm{\bullet}_1$, where $\norm{\mathcal F}_* = 1$ for channels. At the same time, $z = \mathcal G(x)$ satisfies $\norm{z}_1\leq \norm{\mathcal G}_* \norm{x}_1 \leq 1$. Then, 
\begin{align}
\norm{\Xi(\E)}_\diamond \leq \sup\{ \norm{\E\otimes \id (z)}_1 : \norm{z}_1 \leq 1 \} = \norm{\E}_\diamond \, .
\end{align}
\end{proof}

\begin{obs}
    For unitary channels, bounds for the quantity $\distquantum{1}$ can be obtained with semidefinite programming for unitary channels.
\end{obs}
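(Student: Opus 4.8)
The plan is to turn the trace-norm minimization defining $\distquantum{1}(\E^U)$ into a largest-eigenvalue problem, which is a textbook semidefinite program once the feasible set $\operatorname{LOSR}$ is replaced by one of its semidefinite relaxations. First I would recall that $\distquantum{1}(\E^U)=\min\{\norm{J^{\E^U}-J^\Phi}_1:\Phi\in\operatorname{LOSR}\}$ and that, for a unitary channel $\E^U(\rho)=U\rho U^\dagger$, the Choi matrix is the rank-one operator $J^{\E^U}=|U\rangle\!\rangle\langle\!\langle U|$ with $|U\rangle\!\rangle=(\id\otimes U)\sum_i|ii\rangle$. Both $J^{\E^U}$ and every $J^\Phi$ are Choi matrices of trace-preserving maps, so they carry the same trace $\tr J^{\E^U}=\tr J^\Phi=d_{A_0B_0}$.

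The key step, and the one that singles out unitary channels, is the identity $\norm{J^{\E^U}-J^\Phi}_1=2\lambda_{\max}(J^{\E^U}-J^\Phi)$. Writing $M=J^{\E^U}-J^\Phi$, positivity of $J^\Phi$ gives $M\preceq J^{\E^U}$; since the rank-one matrix $J^{\E^U}$ has vanishing second eigenvalue, monotonicity of eigenvalues under the Loewner order yields $\lambda_2(M)\le\lambda_2(J^{\E^U})=0$, so $M$ has at most one positive eigenvalue. Splitting $\norm{M}_1$ into its positive and negative spectral parts and using $\tr M=0$ then collapses it to $2\lambda_{\max}(M)$.

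I would then encode the largest eigenvalue by $\lambda_{\max}(M)=\min\{t:t\,\id\succeq M\}$, obtaining
\begin{equation}
\distquantum{1}(\E^U)=2\min\{\,t:\ t\,\id\succeq J^{\E^U}-J^\Phi,\ t\in\R,\ \Phi\in\operatorname{LOSR}\,\}.
\end{equation}
The objective and the linear-matrix-inequality constraint are semidefinite, so the only obstruction to this being an SDP is the constraint $\Phi\in\operatorname{LOSR}$. Replacing $\operatorname{LOSR}$ by a semidefinite outer approximation from the convergent hierarchy of Appendix~\ref{app:hierarchyLOSR} (separable channels intersected with the quantum-nonsignaling constraints of Eqs.~\eqref{eq:qNSab}--\eqref{eq:qNSba}, strengthened by PPT, higher DPS levels, and the local-decoherent-action constraint~\eqref{eq:appLochier}) enlarges the feasible set and therefore lowers the minimum, giving certified lower bounds on $\distquantum{1}(\E^U)$; evaluating the same linear-matrix inequality on an explicit product ansatz, or on the output of a seesaw over decompositions~\eqref{eq:appLOSR}, returns matching upper bounds.

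The main obstacle is precisely that $\operatorname{LOSR}$ is not semidefinite-representable — membership is at least as hard as separability — so no finite SDP computes $\distquantum{1}$ exactly, and the statement is necessarily about \emph{bounds}. The real content is the unitarity-specific reduction $\norm{\cdot}_1=2\lambda_{\max}$, which converts a trace-norm objective into a single largest-eigenvalue SDP amenable to the hierarchy; I would therefore take care to verify the trace-matching and rank-one interlacing argument, since it fails for generic channels whose Choi matrices have rank greater than one.
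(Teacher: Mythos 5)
Your proof is correct, but it takes a genuinely different route from the paper's. Both arguments hinge on the rank-one structure of the unitary Choi matrix $J^U$, but they exploit it differently. The paper uses rank-one-ness to make the fidelity $\tr(J^U J^\Phi)$ a \emph{linear} functional of $J^\Phi$, optimizes that linear objective over the hierarchy of Appendix~\ref{app:hierarchyLOSR}, and then converts fidelity bounds into trace-norm bounds via the Fuchs--van de Graaf inequality~\cite{FuchsGraffIneq_Fuchs1999}; the price is that the resulting lower and upper bounds on $\distquantum{1}(\E^U)$ are separated by the quadratic gap inherent to that inequality, even if LOSR were described exactly. You instead use rank-one-ness to collapse the trace norm itself: since $J^\Phi \succeq 0$ and both Choi matrices carry trace $d_{A_0B_0}$, Weyl monotonicity gives $\lambda_2(J^U - J^\Phi) \leq \lambda_2(J^U) = 0$, so the traceless difference has at most one positive eigenvalue and $\norm{J^U - J^\Phi}_1 = 2\lambda_{\max}(J^U - J^\Phi)$ --- a step that is valid and that you rightly flag as failing for Choi matrices of rank greater than one. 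This turns $\distquantum{1}(\E^U)$ into the exact program $2\min\{t : t\,\id \succeq J^U - J^\Phi,\ \Phi \in \operatorname{LOSR}\}$, whose objective and matrix inequality are jointly linear in $(t, J^\Phi)$, so the only looseness in your bounds comes from replacing LOSR by its semidefinite outer relaxation (certified lower bounds) or by explicit feasible LOSR points or a seesaw (upper bounds). That is a genuine advantage: your two-sided bounds converge to $\distquantum{1}(\E^U)$ as the relaxation tightens, whereas the paper's retain the Fuchs--van de Graaf gap. What the paper's formulation buys in exchange is simplicity of interface: its objective $\tr(J^U J^\Phi)$ is exactly the kind of linear functional the hierarchy of Eqs.~\eqref{eq:appCP}--\eqref{eq:appCONS2} is set up to optimize, while your version requires appending one scalar variable and one linear matrix inequality --- still a semidefinite program, so nothing essential is lost.
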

\begin{proof}
Recall that the Choi matrix $J^U$ of a unitary channel has rank 1. Therefore, the Fidelity with respect to another Choi state $J^\Phi$ is a linear function, $F=\tr(J^UJ^\Phi)$, which in partciular can be optimized over $\Phi\in\text{LOSR}$ with the (strengthened) hierarchy of Appendix~\ref{app:hierarchyLOSR}. Being $d_{A_0B_0}$ the dimension of the input system, the Fuchs–van de Graaf inequality~\cite{FuchsGraffIneq_Fuchs1999} applied to Choi matrices establishes that $\|J^U-J^\Phi\|_1$ is lower bounded by $d_{A_0B_0}-\sqrt{\tr(J^U J^\Phi)}$ and upper bounded by $ \sqrt{d_{A_0B_0}-\tr(J^U J^\Phi)}$, whose optimal values can be approximated with semidefinite programming. This in turn allows to lower and upper bound $\distquantum{1}$, since the square root function is monotonic over its domain.
\end{proof}


\section{Proof of Proposition~\ref{prop:NcvsNq} and examples}\label{app:NcvsNq}
\begin{proof}
Consider a bipartite channel $\E$ with decoherent action $S^\E$ and a function $\Delta$ with corresponding decoherent measure $\delta$. Let $\tilde{\Phi}\in\text{LOSR}$ minimize $\Delta(\mathcal{D}\circ\E\circ\mathcal{D},\Phi)$ over $\Phi\in\text{LOSR}$. On the one hand, we have
\begin{align}
    \distquantum{\Delta}(\mathcal{D}\circ\E\circ\mathcal{D})&=\Delta(\mathcal{D}\circ\E\circ\mathcal{D},\tilde{\Phi})\nonumber\\
    &\leq \Delta(\mathcal{D}\circ\E\circ\mathcal{D},\mathcal{D}\circ\tilde{\Phi}\circ\mathcal{D})\,,\label{eq:appNq<Nc}
\end{align}
since $\mathcal{D}\circ\tilde{\Phi}\circ\mathcal{D}$ is an LOSR operation which is in general different from the minimizing LOSR operation $\tilde{\Phi}$. On the other hand, recall that by assumption $\Delta$ is contractive under superchannels $\Xi$, and in particular under concatenation with the dephasing $\mathcal{D}$. Therefore, we have the following inequality,
\begin{align}
     \distquantum{\Delta}(\mathcal{D}\circ\E\circ\mathcal{D})&\geq\Delta(\mathcal{D}^2\circ\E\circ\mathcal{D}^2,\mathcal{D}\circ\tilde{\Phi}\circ\mathcal{D})\nonumber\\
     &=\Delta(\mathcal{D}\circ\E\circ\mathcal{D},\mathcal{D}\circ\tilde{\Phi}\circ\mathcal{D})\,,\label{eq:appNq>Nc}
\end{align}
where the equality holds because $\mathcal{D}$ is a projection and therefore $\mathcal{D}^2:=\mathcal{D}\circ\mathcal{D}=\mathcal{D}$. Eqs.~\eqref{eq:appNq<Nc} and~\eqref{eq:appNq>Nc} together imply  
$\distquantum{\Delta}(\mathcal{D}\circ\E\circ\mathcal{D})=\Delta(\mathcal{D}\circ\E\circ\mathcal{D},\mathcal{D}\circ\tilde{\Phi}\circ\mathcal{D})$. This means that for the channel $\mathcal{D}\circ\E\circ\mathcal{D}$, there exists a decohered LOSR channel attaining the minimum in Eq.~\eqref{eq:ChanNLdef}.

Now, by definition of each  $\delta$ from decoherence of $\Delta$, we can write
\begin{equation}
\distclassical{\delta}(S^\E)=\min_{\Phi\in\text{LOSR}}\Delta(\mathcal{D}\circ\E\circ\mathcal{D},\mathcal{D}\circ\Phi\circ\mathcal{D}),
\end{equation}
since the whole local polytope can be obtained by decoherence of the set of LOSR operations~\cite{OurLong_2024}. 
Thus we arrive at
$\distquantum{\Delta}(\mathcal{D}\circ\E\circ\mathcal{D})=\distclassical{\delta}(S^\E)$, which shows item (i). To see item (ii) we note that
\begin{align}
    \distclassical{\delta}(S^\E)=\distquantum{\Delta}(\mathcal{D}\circ\E\circ\mathcal{D})\leq\distquantum{\Delta}(\E).
\end{align}
The equality holds due to item (i); the inequality holds because $\Delta$ is contractive under superchannels and therefore $\distquantum{\Delta}$ is monotonic under local superchannels (see Appendix~\ref{app:ProofNQmonotone}). In particular, $\mathcal{D}$ is a local superchannel.
\end{proof}

\begin{claim} Consider
$\E : \mathcal B(A_0 \otimes B_0) \to \mathcal B(A_1 \otimes B_1)$ a channel and $S:\ell_1(A_0 B_0) \to \ell_1(A_1 B_1)$ its decoherent action. Then, 
\begin{equation}
\label{eq:diamond_decoherence}
\norm{\mathcal D \circ \E \circ \mathcal D}_\diamond = \norm{S}_{11},
\end{equation}
where 
$
    \norm{\E}_\diamond  = \sup \{\norm{\E \otimes \id_{A_0B_0} (x)}_1 : \norm{x}_1 \leq 1\} 
$
and 
$
    \norm{S}_{11} = \sup \{\norm{S(a)}_{\ell_1} : \norm{a}_{\ell_1} \leq 1\} 
$
is the operator norm between sequence spaces $\ell_1$.
\end{claim}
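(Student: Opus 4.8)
The plan is to reduce the diamond norm---which is defined through an auxiliary reference system---to a finite quantity depending only on the stochastic matrix $S$, exploiting that $\mathcal D\circ\E\circ\mathcal D$ both reads and writes only diagonal entries. First I would compute the reference-assisted action explicitly. Writing an input operator $X$ on $(A_0B_0)\otimes R$ as $X=\sum_{xy,x'y'}\dyad{xy}{x'y'}\otimes X_{xy,x'y'}$ in the computational basis of the system, the leading $\mathcal D\otimes\id$ annihilates every off-diagonal block; applying $\E\otimes\id$ and then the trailing $\mathcal D\otimes\id$ and invoking the definition~\eqref{eq:CAchannel} of $S$ gives $(\mathcal D\circ\E\circ\mathcal D\otimes\id)(X)=\sum_{ab}\dyad{ab}{ab}\otimes Y_{ab}$ with $Y_{ab}:=\sum_{xy}S_{ab,xy}\,X_{xy,xy}$. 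Since the summands are supported on orthogonal subspaces of $A_1B_1$, the trace norm is additive over them, so $\norm{(\mathcal D\circ\E\circ\mathcal D\otimes\id)(X)}_1=\sum_{ab}\norm{Y_{ab}}_1$, a quantity that depends on $X$ only through its diagonal blocks $X_{xy,xy}$.

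Next I would reduce the optimization to those diagonal blocks. Because the completely dephasing pinching $\mathcal D\otimes\id$ is trace-norm contractive and sends $X$ to $\sum_{xy}\dyad{xy}{xy}\otimes X_{xy,xy}$, one has $\sum_{xy}\norm{X_{xy,xy}}_1=\norm{(\mathcal D\otimes\id)(X)}_1\le\norm{X}_1$, with equality attained for block-diagonal $X$. Hence the supremum defining the diamond norm collapses to
\begin{equation*}
\norm{\mathcal D\circ\E\circ\mathcal D}_\diamond=\sup\Big\{\sum_{ab}\norm{\sum_{xy}S_{ab,xy}Z_{xy}}_1:\sum_{xy}\norm{Z_{xy}}_1\le1\Big\},
\end{equation*}
the supremum running over arbitrary operators $Z_{xy}$ on any reference space.

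Finally I would match this with $\norm{S}_{11}$, using the standard fact that the induced $\ell_1\!\to\!\ell_1$ operator norm of a matrix is its maximal absolute column sum, $\norm{S}_{11}=\max_{xy}\sum_{ab}|S_{ab,xy}|$. For the upper bound, the triangle inequality yields $\sum_{ab}\norm{\sum_{xy}S_{ab,xy}Z_{xy}}_1\le\sum_{xy}\norm{Z_{xy}}_1\big(\sum_{ab}|S_{ab,xy}|\big)\le\norm{S}_{11}\sum_{xy}\norm{Z_{xy}}_1\le\norm{S}_{11}$. For the matching lower bound I would take a one-dimensional reference and scalars $Z_{xy}$ equal to $1$ on a column $(x^*,y^*)$ achieving the maximum and $0$ otherwise, which produces exactly $\sum_{ab}|S_{ab,x^*y^*}|=\norm{S}_{11}$; equivalently, this is the ordinary $1\!\to\!1$ trace norm of $\mathcal D\circ\E\circ\mathcal D$ evaluated on diagonal inputs. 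Combining the two bounds gives the claimed equality.

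The step I expect to be the main obstacle is the reduction in the second paragraph: one must argue carefully that the reference system confers no advantage, i.e. that the supremum is governed entirely by the trace norms of the diagonal blocks. The two ingredients that force this---additivity of the trace norm over orthogonal blocks and the pinching inequality $\sum_{xy}\norm{X_{xy,xy}}_1\le\norm{X}_1$ with its block-diagonal saturation---are precisely what make the diamond norm collapse onto the classical operator norm $\norm{S}_{11}$. I would also remark that, although the statement is phrased for a channel (where both sides equal $1$), the argument uses only linearity and nowhere the nonnegativity of $S$, so it applies verbatim to a difference $\E-\Phi$ of channels, which is the case relevant to the measure $\distclassical{\diamond}$.
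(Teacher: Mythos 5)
Your proof is correct, and its two-sided structure (classical inputs for the lower bound, dephasing of the input for the upper bound) is broadly the same as the paper's; the essential difference is in how the upper bound is closed. The paper maps each admissible input $x=\sum_{ij}x_{ij}\ketbra{i}{j}\otimes A_{ij}$ to the signed vector $a=\sum_i x_{ii}\norm{A_{ii}}_1\ket{i}$ and asserts $\norm{(\mathcal D\circ\E\circ\mathcal D\otimes\id)(x)}_1\le\norm{S(a)}_{\ell_1}$, i.e.\ $\sum_j\norm{\sum_i x_{ii}S_{ji}A_{ii}}_1\le\sum_j\left|\sum_i x_{ii}S_{ji}\norm{A_{ii}}_1\right|$. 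As written that inequality can fail when the diagonal blocks are not ``aligned'': taking $x_{00}A_{00}=\tfrac12\ketbra{0}{0}$, $x_{11}A_{11}=-\tfrac12\ketbra{1}{1}$ and $S_{ji}=\tfrac12$ for all $i,j$ gives left side $1$ and right side $0$; the step needs $|x_{ii}|$ in place of $x_{ii}$, and then it leans on the entrywise nonnegativity of $S$. Your route sidesteps this: after collapsing the supremum onto block-diagonal inputs (additivity of the trace norm over the orthogonal output blocks plus the pinching contraction $\norm{(\mathcal D\otimes\id)(x)}_1\le\norm{x}_1$), you bound directly by the maximal absolute column sum $\max_{xy}\sum_{ab}|S_{ab,xy}|=\norm{S}_{11}$ via the plain triangle inequality, which is valid for arbitrary sign patterns. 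This robustness is exactly what is needed downstream, since the identity is invoked for differences $\E-\Phi$ when computing $\distclassical{\diamond}$, where $S$ has negative entries; there your argument applies verbatim, whereas the paper's reduction to $\norm{S(a)}_{\ell_1}$ would have to be repaired. Your lower bound (one-dimensional reference, indicator of the maximizing column) is a special case of the paper's construction, which embeds an arbitrary $\ell_1$ vector as a diagonal input tensored with a fixed pure state; both are fine.
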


\begin{proof}
First, given $a = \sum_i a_i \ket i \in \ell_1(A_0 \times B_0)$ with $\norm{a}_{\ell_1} \leq 1$ then $x = \sum_i a_i \ketbra{i}{i}\otimes \ketbra{0}{0}$ has $\norm{x}_1 \leq 1$. 
It is immediate to check that
\begin{equation}
\norm{\mathcal D\circ \E \circ \mathcal D \otimes \id_{XY}(x)}_1 = \norm{S(a)}_{\ell_1}  \, ,
\end{equation}
thus $\norm{\mathcal D \circ \E \circ \mathcal D}_\diamond \geq \norm{S}_{11}$.

Conversely, given $x = \sum_{ij} x_{ij} \ketbra{i}{j} \otimes A_{ij}$ with $\norm{x}_1 \leq 1$ we have that $\mathcal D(x) = \sum_{i} x_{ii} \ketbra{i}{i} \otimes A_{ii}$ satisfies $\norm{\mathcal D(x)}_1  = \sum_i |x_{ii}| \norm{A_{ii}}_1 \leq 1$. This is a consequence of Gelfand-Naimark theorem \cite[Theorem III.4.5]{bhatia2013matrix}, since $\mathcal D$ is a projection. Then $a = \sum_i x_{ii} \norm{A_{ii}}_1 \ket i $ satisfies $\norm{a}_{\ell_1} \leq 1$. Furthermore, we can check
\begin{equation}
\norm{\mathcal D\circ \E \circ \mathcal D \otimes \id_{XY}(x)}_1 \leq \norm{S(a)}_{\ell_1}  \, ,
\end{equation}
thus, $\norm{\mathcal D \circ \E \circ \mathcal D}_\diamond \leq \norm{S}_{11}$. Indeed,
\begin{equation}
\sum_j \norm{\sum_i x_{ii} S_{ji} A_{ii}}_1
\leq \sum_j \left| \sum_i x_{ii} S_{ji} \norm{A_{ii}}_1 \right| \, .
\end{equation}

\end{proof}

Notice that from Eq.~\eqref{eq:diamond_decoherence} and the fact that LOSR is mapped to L under decoherence it follows that $\distquantum{\diamond}(\mathcal D\circ \E \circ \mathcal D) = \distclassical{\diamond}(S^\E)$ as in Proposition~\ref{prop:NcvsNq}. 
The same holds for the operator norm between spaces with the trace norm, $\norm{\mathcal D \circ \E \circ \mathcal D}_{11} = \norm{S}_{11}$. That is, the dilation in the diamond norm makes no difference under decoherence.

\begin{claim}
The trace norm of the Choi state reduces under decoherence to the element-wise one norm of the decoherent action,
\begin{equation}
\label{eq:trace_decoherence}
    \norm{J(\mathcal D \circ \E \circ \mathcal D)}_1 = \norm{S^\E}_{\ell_1} \, ,
\end{equation}
where $\norm{S}_{\ell_1} = \sum_{ij}|S_{ij}|$.
\end{claim}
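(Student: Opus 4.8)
The plan is to show that the dephased channel $\mathcal D \circ \E \circ \mathcal D$ has a Choi matrix that is diagonal in the computational basis of $A_0B_0A_1B_1$, with diagonal entries exactly the matrix elements of $S^\E$; the identity will then follow from the fact that the trace norm of a diagonal positive semidefinite operator equals the sum of its entries. I would start from the unnormalized Choi state of a channel $\Phi$, written as $J(\Phi)=\sum_{m,m'}\ketbra{m}{m'}\otimes\Phi(\ketbra{m}{m'})$ with $m,m'$ running over the input computational basis of $A_0B_0$, and feed in the inner dephasing: since $\mathcal D(\ketbra{m}{m'})=\delta_{mm'}\ketbra{m}{m}$, only the diagonal terms $m=m'$ survive.

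Next I would apply the outer dephasing to each surviving image. By the definition of the decoherent action in Eq.~\eqref{eq:CAchannel}, $\mathcal D(\E(\ketbra{m}{m}))=\sum_n S^\E_{n,m}\ketbra{n}{n}$, where $n$ labels the output computational basis of $A_1B_1$. Combining the two steps gives the diagonal operator
\begin{equation}
J(\mathcal D \circ \E \circ \mathcal D)=\sum_{m,n}S^\E_{n,m}\,\ketbra{m}{m}\otimes\ketbra{n}{n},
\end{equation}
which is precisely $\diag(J^\E)$, in agreement with the remark below Eq.~\eqref{eq:CAchannel} that complete decoherence keeps only the diagonal of the Choi matrix.

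Finally I would evaluate the trace norm. Because $S^\E$ is a (column-)stochastic matrix, its entries are non-negative, so $J(\mathcal D \circ \E \circ \mathcal D)$ is positive semidefinite and its trace norm coincides with its trace: $\sum_{m,n}|S^\E_{n,m}|=\sum_{m,n}S^\E_{n,m}=\norm{S^\E}_{\ell_1}$, as claimed. I expect no genuine analytic obstacle here; the only point needing care is the non-negativity of the entries of $S^\E$, which is exactly what lets the absolute values in the $\ell_1$ norm be dropped and the trace norm of the diagonal matrix be identified with its trace. As a consistency check, both sides equal the input dimension $d_{A_0B_0}$, matching $\tr J=d_{A_0B_0}$ for any trace-preserving channel.
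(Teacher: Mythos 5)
Your proof is correct and follows essentially the same route as the paper, which simply observes that $J(\mathcal D \circ \E \circ \mathcal D)$ is diagonal with diagonal equal to $S^\E$; you have merely spelled out that observation and the final trace-norm evaluation in detail. One small remark: the non-negativity of the entries of $S^\E$ is not actually needed for the stated identity, since the trace norm of any diagonal matrix equals the sum of the absolute values of its entries, which is already $\norm{S^\E}_{\ell_1}$ (this stronger fact is what makes the same argument apply to differences $J(\mathcal D\circ\E\circ\mathcal D)-J(\mathcal D\circ\Phi\circ\mathcal D)$, as used later for $\distquantum{1}$).
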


\begin{proof}
This follows from the fact that the decoherent action $S^\E$ coincides with the diagonal of $J(\mathcal D \circ \E \circ \mathcal D)$ and the off-diagonal terms are zero.
\end{proof}

Again, from Eq.~\eqref{eq:trace_decoherence} and the fact that LOSR is mapped to L under decoherence it follows that $\distquantum{1}(J(\mathcal D\circ \E \circ \mathcal D)) = \distclassical{1}(S^\E)$ as in Proposition~\ref{prop:NcvsNq}.

\begin{claim}
The relative entropy of the Choi state reduces under decoherence to the classical relative entropy of the decoherent action,
\begin{equation}
\label{eq:entropy_decoherence}
    H(J(\mathcal D \circ \E \circ \mathcal D), J(\mathcal D \circ \Phi \circ \mathcal D)) = h(S^\E | S^\Phi) \, ,
\end{equation}
where $h(S| T) = \sum_{ij} S_{ij} (\log S_{ij} - \log T_{ij})$.
\end{claim}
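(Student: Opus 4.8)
The plan is to reuse the observation underlying the two previous claims: complete decoherence $\mathcal D\circ\bullet\circ\mathcal D$ renders the four-partite Choi matrix diagonal in the computational basis of $A_0B_0A_1B_1$, with its diagonal given by the decoherent action. First I would verify that $J(\mathcal D\circ\E\circ\mathcal D)=\sum_{x,y,a,b}S^\E_{ab,xy}\,\dyad{xy}{xy}\otimes\dyad{ab}{ab}$. Indeed, the input dephasing restricts $\E$ to act on $\dyad{xy}{xy}$ and the output dephasing keeps only the diagonal of the image, so by Eq.~\eqref{eq:CAchannel} the surviving entries are exactly $S^\E_{ab,xy}=\bra{ab}\E(\dyad{xy}{xy})\ket{ab}$, which is precisely the identification of the diagonal of $J^\E$ with the vectorization of $S^\E$ noted after that equation. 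The identical computation for $\Phi$ produces a diagonal Choi matrix with diagonal $S^\Phi$.

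Since both matrices are diagonal in the same basis they commute, so their matrix logarithms act entrywise and $\log J(\mathcal D\circ\E\circ\mathcal D)$, $\log J(\mathcal D\circ\Phi\circ\mathcal D)$ remain diagonal. The quantum relative entropy defined in the Preliminaries, $H(J^\E|J^\Phi)=\tr J^\E(\log J^\E-\log J^\Phi)$, then collapses to a sum over the common diagonal,
\begin{equation}
H\big(J(\mathcal D\circ\E\circ\mathcal D)\,\big|\,J(\mathcal D\circ\Phi\circ\mathcal D)\big)=\sum_{ij}S^\E_{ij}\big(\log S^\E_{ij}-\log S^\Phi_{ij}\big),
\end{equation}
which is exactly the classical relative entropy $h(S^\E|S^\Phi)$, establishing the claim.

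Once diagonality is in place the computation is entirely routine, so there is no genuine obstacle; the only point deserving care is the support condition. The quantum relative entropy is finite precisely when $\mathrm{supp}\,J(\mathcal D\circ\E\circ\mathcal D)\subseteq\mathrm{supp}\,J(\mathcal D\circ\Phi\circ\mathcal D)$, which in the diagonal case reads $S^\E_{ij}>0\Rightarrow S^\Phi_{ij}>0$ --- exactly the finiteness condition for the classical relative entropy under the convention $0\log 0=0$. Hence the identity holds in the extended sense as well, mirroring the earlier trace- and diamond-norm claims, the single substantive ingredient being that decoherence diagonalizes the Choi matrix in the preferred basis.
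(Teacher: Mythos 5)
Your proof is correct and takes essentially the same approach as the paper: both arguments rest on the single fact that complete decoherence renders the Choi matrix diagonal in the computational basis with diagonal entries given by $S^\E$, after which the quantum relative entropy $\tr J (\log J - \log J')$ collapses entrywise to the classical expression $h(S^\E|S^\Phi)$. Your explicit verification of the diagonal form and the remark on the support condition simply spell out details the paper leaves implicit.
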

\begin{proof}
This follows from the fact that the decoherent action $S^\E$ coincides with the diagonal of $J(\mathcal D \circ \E \circ \mathcal D)$ and the off-diagonal terms are zero.
\end{proof}

Again, from Eq.~\eqref{eq:entropy_decoherence} and the fact that LOSR is mapped to L under decoherence it follows that $\distquantum{H}(J(\mathcal D\circ \E \circ \mathcal D)) = \distclassical{H}(S^\E)$ as in Proposition~\ref{prop:NcvsNq}.

\section{A hierarchy approximating LOSR}\label{app:hierarchyLOSR}
With the Choi-Jamio\l kowski isomorphism~\cite{choi1975completely,jamiolkowski1972linear}, deciding whether or not a given bipartite channel $\E_{AB}$ is LOSR is a constrained version of the separability problem: one asks that the Choi matrix defines a {\em separable operation} (SEP),
\begin{equation}\label{eq:appSepChoi}
    J_{A_0B_0A_1B_1}=\sum_\lambda p_\lambda J^\lambda_{A_0A_1}\otimes J^\lambda_{B_0B_1}\,,
\end{equation}
where $p_\lambda\geq 0$ and $J^\lambda_{A_0A_1},J^\lambda_{B_0B_1}\geq 0$~\cite{doherty2004complete, Navascues_PPT_DPS_2009}, and the additional linear constraints $\tr_{A_1}(J_{A_0A_1}^\lambda)=\id_{A_0}$ and $\tr_{B_1}(J_{B_0B_1}^\lambda)=\id_{B_0}$ to ensure that each factor defines a completely positive and trace preserving (CPTP) map. A solution~\cite{ChanFidSymm_Chee2024} can be found by applying the hierarchy of~\cite{berta2021semidefinite}, which consists of strengthening the standard Doherty-Parrilo-Spedalieri (DPS) hierarchy~\cite{doherty2004complete,Navascues_PPT_DPS_2009} with further linear constraints at each level $n$: 
\begin{align}
    \text{find} & \quad X\geq 0\label{eq:appCP} \\
    &\quad \tr_{A_1B_1^{0}B^1...B^{n}}(X)=\id_{A_0B_0}\label{eq:appTP}\\
    &\quad V_{B^{i},B^{n}}XV_{B^{i},B^{n}}^\dag = X \quad\forall\,i\label{eq:appSYM}\\
    &\quad X^{T_{B^{n}}}\geq 0\label{eq:appPT}\\
    &\quad \tr_{B_1^{n}}(X)=\tr_{B_0^{n}B_1^{n}}X\otimes\id_{B_0^{n}}\label{eq:appCONS1}
    \\
    &\quad \tr_{A_0}(X)=\id_A\otimes\tr_{A_0A_1}(X)\label{eq:appCONS2}
\end{align}
where $X$ 
acts on an extended Hilbert space $\mathcal{H}_{A_0B_0A_1B_1}\otimes(\mathcal{H}_{B_0B_1})^{\otimes n-1}$, we denote $A=A_0A_1$ and $B^{(0)}=B_0B_1$, $X_{A_0B_0A_1B_1}=\tr_{B^1...B^n}(X)$, $V_{B^{(i)},B^{(n)}}$ is the SWAP operator between the $i$'th and $n$'th party and $T_{B^{(n)}}$ is the partial transpose on the party $B^{(n)}$. Here Eqs.~\eqref{eq:appCP} and~\eqref{eq:appTP} impose that the marginal $X_{A_0B_0A_1B_1}$ is the Choi matrix of a CPTP map; Eqs.~\eqref{eq:appSYM} and~\eqref{eq:appPT} converge to the condition that $X_{A_0B_0A_1B_1}$ decomposes as~\eqref{eq:appSepChoi}~\cite{doherty2004complete,Navascues_PPT_DPS_2009}, which means that $X_{A_0B_0A_1B_1}$ defines a separable operation, $X_{A_0B_0A_1B_1}=\sum_\lambda X_{A_0A_1}^\lambda\otimes X_{B_0B_1}^\lambda$; Eqs.~\eqref{eq:appCONS1} and~\eqref{eq:appCONS2} impose that $X_{A_0B_0A_1B_1}$ is the Choi matrix of a qNS channel, and converge at some $n$ to the condition that the marginals of each term satisfy $X^\lambda_{A_0}=\id$ and $X^\lambda_{B_0}=\id$~\cite{berta2021semidefinite}, which means that they define trace preserving channels and therefore $X_{A_0B_0A_1B_1}$ defines an LOSR channel. 

Consider a diagonal Choi matrix given by $J^\E=\sum_{xyab}p(ab|xy)\dyad{xyab}{xyab}$ with $p=S^\E\in\text{NS}$
, e.g. given by the Popescu-Rohrlich (PR-) box 
\begin{equation}\label{eq:CAprBox}
S^\E=\frac{1}{2}
\begin{pmatrix}
    1 & 1 & 1 & 0 \\
    0 & 0 & 0 & 1 \\
    0 & 0 & 0 & 1 \\
    1 & 1 & 1 & 0
\end{pmatrix}\,,
\end{equation}
Such Choi matrix is in the set SEP because it is diagonal in product basis, and thus is not detected by the Doherty-Parrilo-Spedalieri (DPS) hierarchy (Eqs.~\eqref{eq:appCP}-\eqref{eq:appPT}). At the first level, equations~\eqref{eq:appCONS1} and~\eqref{eq:appCONS2} are also satisfied because a classical map given by a stochastic matrix in the nonsignaling polytope defines a quantum nonsignaling channel. However, such an example does not satisfy Eq.~\eqref{eq:appLochier}.

\section{Proof of Proposition~\ref{prop:SimulStoch}}\label{app:SimulStoch}
\begin{proof}
Consider an LOSE operation as in Eq.~\eqref{eq:Localizable}. 
Let the Kraus representation of the channels $\E_{AR}$ and $\E_{BS}$, which act through $\E_{AR}(\cdot)=\sum_{i,j}K^{\E_{AR}}_{i,j}\cdot{K^{\E_{AR}}_{i,j}}^\dag$ and $\E_{BS}(\cdot)=\sum_{k,l}Q^{\E_{BS}}_{k,l}\cdot{Q^{\E_{BS}}_{k,l}}^\dag$, be $ K^{\E_{AR}}_{i,j}=\dyad{i}{j}_A\otimes{L^{i|j}}_R$ and $Q^{\E_{BS}}_{k,l}=\dyad{k}{l}_B\otimes{T^{k|l}}_S$, 
where $M^{i|j}:={L^{i|j}}^\dag L^{i|j}$ and $N^{k|l}:={T^{k|l}}^\dag T^{k|l}$ are conditional measurement operators. One verifies that the decoherent action $S^{\E}$ of $\E_{AB}$ reads entrywise $S^{\E}_{ab,xy}=\tr(M^{a|x}\otimes N^{b|y}\tau_{RS})$, obtaining a similar result as if Alice and Bob were able to measure in coherent bases to implement a standard Bell game~\cite{BrunnerReviewBellNonloc_2014}. 
Therefore, Alice and Bob can simulate a classical channel beyond the set of matrices given by Eq.~\eqref{eq:losr-stochastic-matrix} without communication if they have access to quantum resources.
\end{proof}

\bibliography{Bibliography}

\end{document}